\newtheorem{thm}{Theorem}[section]
\newtheorem{lem}{Lemma}[section]
\newcommand{\0}{| 0 \rangle}
\newcommand{\1}{| 1 \rangle}
\newcommand{\g}{| G \rangle}
\newcommand{\gr}{| G_R \rangle}
\newcommand{\gl}{| G_L \rangle}
\newcommand{\qs}[1]{| #1 \rangle}
\begin{document}

\title{Efficient One-way Quantum Computations for Quantum Error Correction}

\author{W. Huang$^{1,*}$,Zhaohui Wei$^2$}

\affiliation{$^1$Department of Electrical Engineering and Computer
Science, University of Michigan, Ann Arbor, MI 48109 \\
$^2$State Key Laboratory of
Intelligent Technology and Systems, Department of Computer Science
and Technology, Tsinghua University, Beijing, China, 100084}

\email{weihuang@eecs.umich.edu}

\begin{abstract}
We show how to explicitly construct an $O(nd)$ size and constant
quantum depth circuit which encodes any given $n$-qubit stabilizer
code with $d$ generators. Our construction is derived using the
graphic description for stabilizer codes and the one-way quantum
computation model. Our result demonstrates how to use cluster states
as scalable resources for many multi-qubit entangled states and how
to use the one-way quantum computation model to improve design of
quantum algorithms.

\end{abstract}

\pacs{ 03.67.-a , 03.67.Lx , 03.67.Pp }

 \maketitle

\section{Introduction}

The \textit{one-way quantum computation (1WQC)} model
\cite{RB01,RB02,RBB03}, due to its simplicity, universality and
parallelism, is widely considered as a very promising scheme for the
experimental development of a quantum computer
\cite{N04,ND04,BR04,CMJ04,BK04,Nature05,BES05}.

The 1WQC model starts with a highly entangled cluster state and
performs quantum computing simply by a sequence of adaptive
single-qubit measurements and post-measurement local corrections.
Thus the whole computation is separated into four parts: 1)
preparing cluster states, 2) performing single-qubit measurements,
3) classically processing measurement outcomes, and 4) performing
post-measurement local unitary corrections.

Such a simple model has been proved to be universal for quantum
computation since any quantum circuit can be efficiently simulated
on it. Moreover, by translating normal quantum circuits into
1WQC-compatible circuits, it is possible to reduce circuit depth and
increase parallelism, which is critical to overcome the quantum
decoherence problem \cite{J05,BK07}.

In this paper, we reproduce the previous encoding procedure for
quantum error correction \cite{CG96} under one-way quantum
computation model. Using only $O(nd)$ single-bit operations and a
small amount of two-bit measurements, we encode any given $n$-qubit
stabilizer code with $d$ generators. Furthermore, we will show that
the depth of our construction is constant. Our construction is
derived using the graphic description for stabilizer codes.

This paper is organized as follows: in Section II, we review the
connections between graph states and stabilizer codes. In Section
III, we produce an $O(nd)$ size and constant-depth 1WQC-compatible
circuit for the encoding and decoding procedure of arbitrary graph
codes.

\section {Preliminaries}

In this section, let us recall some basic notions concerning this
paper. More details can be reviewed in \cite{G97, NC00}.

\subsection{Stabilizer codes, graph codes and graph states}

The \textit{Pauli group} $\mathcal{P}_n$ on $n$ qubits is defined to
consist of $n$-fold tensor products of the Pauli matrices
\{I,X,Y,Z\}
\[
  I=\left(
    \begin{array}{cc}
      1 & 0 \\
      0 & 1 \\
    \end{array}
  \right) ,
  X=\left(
    \begin{array}{cc}
      0 & 1 \\
      1 & 0 \\
    \end{array}
  \right) ,
  Y=\left(
    \begin{array}{cc}
      0 & -i \\
      i & 0 \\
    \end{array}
  \right) ,
  Z=\left(
    \begin{array}{cc}
      1 & 0 \\
      0 & 1 \\
    \end{array}
  \right)
\] with multiplicative factors \{$\pm 1$, $\pm \textit{i}$\}.

The \textit{Clifford group}   $\mathcal{C}_n$ is defined as the
normalizer of the Pauli group $\mathcal{P}_n$,

\begin{equation*}
    \mathcal{C}_n = \{U \in SU(2^n)~~| ~~\forall P\in \mathcal{P}_n ,~~~  UPU^\dagger \in
    \mathcal{P}_n
    \}.
\end{equation*}

The \textit{local Clifford group} $\mathcal{LC}_n = \mathcal{C}_1
^{\otimes n}$ is a subgroup of $\mathcal{C}_n$ which only consists
of the tensor products of local unitary operations. $The Clifford
group \mathcal{C}_n$ can be generated, up to a global phase factor,
by the Hadamard gate \textit{H}, the phase gate \textit{S} and the
CNOT gate, while the local Clifford group $\mathcal{LC}_n$ can be
generated by the \textit{H} and \textit{S} only. There are only 24
elements in $\mathcal{LC}_1$, up to a global phase.

\textit{Stabilizer group} $\mathcal{S}_n$ is an Abelian subgroup of
\textit{Pauli group} $\mathcal{P}_n$ without $-I$. Any
$\mathcal{S}_n$ is a stabilizer for a non-trivial vector space,
which can be defined as the codespace of a \textit{Stabilizer code},
The codewords of the Stabilizer code form the +1-eigenspace of the
all the operations in $\mathcal{S}_n$.

A $n$-qubit \textit{Stabilizer code} with $d$ generators can encode
a $(n-d)$-qubit state into a $n$-qubit state. The stabilizer group
$\mathcal{S}_n$ corresponding to the stabilizer code can be
generated by $d$ independent elements $g_1,...,g_d$. Other elements
in $\mathcal{S}_n$ can be represented as products of $g_1,...,g_d$.
Thus we can use $g_1,...,g_d$ to describe a stabilizer code. We will
use the \textit{binary framework} of stabilizer formalism to
represent elements in stabilizer group efficiently.

Define a homomorphic map from $\mathcal{P}_1$ to $\mathbb{Z}^2_2$ as
the following:
\[
 I \rightarrow 00,~~ X \rightarrow 10,~~ Y \rightarrow 11,~~ Z \rightarrow
 01 .
\]
After mapping, an element of Pauli group $P = P_1\otimes P_1\otimes
... P_N $ can be described by a binary vector $(X|Z)$,where X is the
vector consisting of the first bits of $P_1,...,P_n$ while Z is the
vector consisting of the second bits. Therefore a $n$-qubit
stabilizer code $C$ with $d$ generators $g_1,...,g_d$ can be
described by a $2n \times d $ \textit{check matrix}: $ S = [X|Z]$
where both X and Z are $n \times d $ matrices.

A $n$-qubit \textit{stabilizer state} $|\psi\rangle$ is a $n$-qubit
stabilizer code with exactly $n$ generators. In this case, the
dimension of the code space is one. $|\psi\rangle$ is the only
vector stabilized by $n$ generators, up to a global phase. The
stabilizer of $|\psi\rangle$ can be described by a $2n \times n $
check matrix. A \textit{graph state} $|G\rangle$ is a stabilizer
state with graphical check matrix $[X|Z] = [I|G]$, where $G$ is the
adjacency matrix of the underlying graph of the graph state. a
$n$-qubit \textit{graph code} with $d$ generators is generated by
check matrix $S[X|Z] = [
                                                               \begin{array}{cc}
                                                                 I ,& R \\
                                                               \end{array}
                                                               | \begin{array}{cc} A+RC^T ,& C \\
                                                               \end{array}
]$, which is closely related to the graph state stabilized by
$S[X|Z] = [
                                                               \begin{array}{cc}
                                                                 I & 0 \\
                                                                 0& I \\
                                                               \end{array}
                                                               | \begin{array}{cc} A & C \\
                                                                                   C^T & 0 \\
                                                               \end{array}
]$.

An important result \cite {S02,GKR01,NDM04} is that any stabilizer
state can be transformed into a graph state with generator matrix
$S[X|Z] = [I|G]$ by a local unitary operation U $\in
\mathcal{LC}_n$. Similarly, any $n$ qubits stabilizer code with $d$
generators can be transformed into a graph code with generator
matrix $S[X|Z] = [
                                                               \begin{array}{cc}
                                                                 I ,& R \\
                                                               \end{array}
                                                               | \begin{array}{cc} A+RC^T ,& C \\
                                                               \end{array}
]$ related to the graph state
 stabilized by $S[X|Z] = [
                                                               \begin{array}{cc}
                                                                 I & 0 \\
                                                                 0& I \\
                                                               \end{array}
                                                               | \begin{array}{cc} A & C \\
                                                                                   C^T & 0 \\
                                                               \end{array}
]$. The adjacency matrix of the underlying graph is
$
\left (
\begin{array}{cc} A & C \\
                                                                                   C^T & 0 \\
                                                               \end{array}
                                                               \right
                                                               )
$.

\subsection{Relation between graph states
and graph codes}

In this subsection, we give an example to demonstrate how to
generate graph codes based on graph state. According to the
relationship between stabilizer codes and graph codes, the basic
idea can be generalized to find the relationship between stabilizer
codes and graph states.

Suppose we have a six-qubit graph code with four generators
$\{g'_1,g'_2,g'_3,g'_4\}$ which is stabilized by
\[
 \left(
  \begin{array}{cccc}
    1 & 0 & 0 & 0 \\
    0 & 1 & 0 & 0 \\
    0 & 0 & 1 & 0 \\
    0 & 0 & 0 & 1 \\
  \end{array}
    \begin{array}{cccc}
    0 & 1 \\
    0 & 0  \\
    1 & 0  \\
    1 & 0  \\
  \end{array}
\right | \left .
  \begin{array}{cccccc}
    0 & 1 & 1 & 1 & 1 & 0 \\
    1 & 0 & 1 & 0 & 0 & 0 \\
    1 & 1 & 0 & 0 & 0 & 1 \\
    0 & 0 & 1 & 1 & 1 & 0 \\
  \end{array}
\right )
 \]

Suppose we also have a graph state $|G \rangle$ with generators
$\{g_1,g_2,g_3,g_4,g_5,g_6\}$ which is stabilized by
\[
 \left(
  \begin{array}{cccccc}
    1 & 0 & 0 & 0 & 0 & 0 \\
    0 & 1 & 0 & 0 & 0 & 0 \\
    0 & 0 & 1 & 0 & 0 & 0 \\
    0 & 0 & 0 & 1 & 0 & 0 \\
    0 & 0 & 0 & 0 & 1 & 0 \\
    0 & 0 & 0 & 0 & 0 & 1 \\
  \end{array}
\right | \left .
  \begin{array}{cccccc}
    0 & 1 & 0 & 1 & 1 & 0 \\
    1 & 0 & 1 & 0 & 0 & 0 \\
    0 & 1 & 0 & 1 & 0 & 1 \\
    1 & 0 & 1 & 0 & 1 & 0 \\
    1 & 0 & 0 & 1 & 0 & 0 \\
    0 & 0 & 1 & 0 & 0 & 0 \\
  \end{array}
\right )
\]

It's not hard to see following equations relating the two sets of
generators:
\begin{equation*}
   g'_1 = g_1 \cdot g_6 ~~ , ~~
   g'_2 = g_2 ~~,  ~~
   g'_3 = g_3 \cdot g_5 ~~, ~~
   g'_4 = g_4 \cdot g_5 ~~.
\end{equation*}

Based on the above relation between the generators of the graph code
and the graph state, we can obtain the graphic representation of the
graph code as shown in the Fig.1.

Let $G'$ denote the graph which includes $G$ plus input nodes $A$
and $B$. Let $|G'\rangle$ denote the graph state corresponding to
$G'$. Suppose the codewords of the graph code are \{ $
|00_L\rangle$, $|01_L\rangle$, $|10_L\rangle$, $|11_L\rangle$ \},
then we have:
\begin{multline*}
  |G'\rangle  = |0\rangle_A|0\rangle_B|00_L\rangle +
   |0\rangle_A|1\rangle_B|01_L\rangle \\
   +|1\rangle_A|0\rangle_B|10_L\rangle  + |1\rangle_A|1\rangle_B|11_L\rangle
\end{multline*}
and
\begin{align*}
 & |00_L\rangle = |G\rangle  \\ &  |01_L\rangle =Z_3Z_4Z_5|G\rangle
\\&  |10_L\rangle =Z_1Z_6|G\rangle  \\& |11_L\rangle
=Z_1Z_3Z_4Z_5Z_6|G\rangle
\end{align*}
where $Z_i$ denotes local unitary $Z$ on qubit $i$.

Fault-tolerant $X$ and $Z$ operations on the first and second qubit
of the encoded state are $(X_L)_1 = Z_1Z_6$, $(X_L)_2 = Z_3Z_4Z_6$,
$(Z_L)_1 = g_1 = X_1Z_2Z_5$ and $(Z_L)_2 = g_3 = X_3Z_2Z_4Z_6$. More
details about graph states will be explained in the next section.

\textbf{Remark:} if we can construct the uniform encoded state
$\sum\limits_{x\in \{0,1\}^k}|x\rangle|x_L\rangle$ in a stabilizer
code, then we can encode any given unknown $k$-qubit state
$|\psi\rangle$ in a stabilizer code by quantum teleportation
\cite{NC00}. In the next section, we focus on using cluster states
to generate any graph state including uniform encoded states for
graph codes.

\begin{figure}
\includegraphics[width=5.0cm]{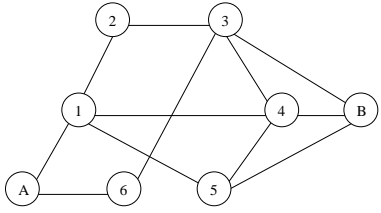}
\caption[Fig. 1.] { Graphic representation of the graph code
generated by matrix T. Vertices A and B are input nodes.}
\end{figure}

\section{Construction of 1WQC-compatible circuit encoding graph codes}

\subsection{Preparation of graph states by definition}

According to the section above, for the purpose of encoding, we have
to prepare the graph state we need. In this subsection, we focus on
this topic. Firstly, let us recall the definition of graph states.
Actually, this definition itself is a method of creating graph
states.

Let $G = (V,E)$ be a graph with $n=|V|$ vertices and $m=|E|$ edges,
then graph state $|G\rangle$ corresponding to the graph $G$ is the
following superposition over all basis states,
\begin{equation*}
|G\rangle = \prod\limits_{\substack{ i<j \\ (i,j)\in
E}}Z_{ij}|+\rangle^{\otimes n}= \sum\limits_{x \in \{0,1\}^n}
(-1)^{q(x)}|x\rangle.
\end{equation*}

Here $Z_{ij}$ denotes the controlled phase gate between qubit $i$
and qubit $j$.
\[
Z_{ij}|+\rangle_i|+\rangle_j = |0\rangle_i|+\rangle_j +
|1\rangle_i|-\rangle_j.
\]

q(x) is a quadratic function related to the graph $G$
\[
q(x)= \sum\limits_{\substack { i<j \\
(i,j)\in E }}x_ix_j
\].

We can verify that $|G\rangle$ is the stabilizer state with the
graphical check matrix $[X|Z] = [I|G]$. Thus we have the following
procedure of preparing the graph state $|G\rangle$ by its natural
definition:

\begin{itemize}
  \item the qubit at each vertex $v\in V$ has the initial state $|0\rangle$,
  \item apply the Hadamard gate on each qubit, so each qubit is now in the state $|+\rangle,$
  \item apply the controlled phase gate $Z_{ij}$ to each edge
  $(i,j)\in E$.
\end{itemize}

Actually, cluster states and graph states are used so widely in
quantum information processing that the preparation of them becomes
an important issue. Many efforts have been made on this problem. On
one hand, it has been shown that cluster states can be grown using a
'divide-and-conquer' approach \cite{Nielsen04,YR03,BR05,BK05,DD05}.
In this approach, bigger cluster states are created by iteratively
connecting smaller clusters together.

On the other hand, another scheme for the preparation of cluster
states is based on optical lattice of ultracold atoms
\cite{JJB99,MGWR03}. In this proposal, the cluster state can be
prepared in one step using a natural nearest-neighbour interaction.
Though this is a theoretical proposal at the present time because of
the difficulties in experiments, it may be a promising and efficient
method of preparing cluster states in the future. In this situation,
it seems necessary to propose a general method for preparing
arbitrary graph states from 2D clusters states. In the following, we
will give such a procedure. Firstly, let us recall some properties
about graph states in the next subsection.

\subsection{Graphical rules of single-qubit pauli measurements}

We start by describing some graphical rules of the operations on the
graph states .

Let $\lambda_a$ denote the \textit{local complement} operation on
vertex $a$ which replaces the subgraph induced by $a$'s neighbors
with its complement. Let $[X]_a$, $[Y]_a$, $[Z]_a$ denote single
qubit \textit{Pauli measurements} $X$, $Y$, $Z$ on qubit $a$
respectively. After each Pauli measurement, a graph state $|G
\rangle$ will transform into another graph state $|\widetilde{G}
\rangle$, up to a local Clifford unitary depending on the
measurement outcome \cite{HD06}. The graphical rules are the
following:
\begin{itemize}
  \item $[Z]_a$ : deletes vertex $a$ and related edges from $G$,
~~$\widetilde{G}=G-a$.
  \item $[Y]_a$ : first applies local complement on vertex $a$,
then delete vertex a,~~$\widetilde{G}=\lambda_a(G)-a$.
  \item $[X]_a$ : chooses any of $a$'s neighbor
$b$, applies local complement on vertex $b$, then applies local
complement on $a$, deletes vertex $a$ and applies local complement
on $b$ again, ~~$\widetilde{G}=\lambda_b(\lambda_a \circ
\lambda_b(G)-a)$.
\end{itemize}

In summary, two graph operations, local complement and vertex
deletion, can be achieved by single qubit Pauli measurements and
local Clifford operations. Based on above simple graphic rules, two
more graph operations, crossing and contraction, can be implemented
as the following:

\begin{figure}
\includegraphics[width=5.0cm]{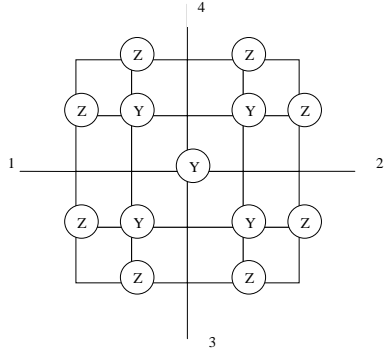}
\caption[Fig. 2.] { Two chains cross each other without sharing
qubits. }
\end{figure}

\begin{lem}(Crossing)
Two chains crossing each other without sharing any qubit can be
simulated in the cluster state of 2D lattice .
\end{lem}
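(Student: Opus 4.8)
The plan is to prove the lemma constructively, by exhibiting a fixed constant-size patch of the 2D square-lattice cluster state together with an explicit single-qubit Pauli measurement pattern that, according to the graphical rules of the previous subsection, leaves behind exactly two vertex-disjoint chains whose planar routing crosses. Because each measurement is followed only by a local Clifford correction, which does not alter the underlying graph, it suffices to track the graph through the measurement sequence and verify that the final graph is the disjoint union of two paths laid out so that one runs horizontally and the other vertically through a common cell (cf. Fig. 2).

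Concretely, I would begin by fixing the crossing cell $c$ of the lattice, whose (at most four) neighbors I label $l,r$ on the horizontal line and $t,b$ on the vertical line, and prune the surrounding lattice to the gadget I need using $[Z]$ measurements, which by the first graphical rule simply delete the unwanted vertices and their incident edges. The horizontal chain is to enter at $l$ and leave at $r$, while the vertical chain enters at $t$ and leaves at $b$. The obstruction is immediate: $c$ is shared by both lines, and a single $[Y]_c$ measurement would local-complement about $c$ and thereby connect all of $l,r,t,b$ pairwise, fusing the two chains into one. The heart of the construction is therefore to introduce one auxiliary lattice qubit so that the vertical line is routed through a cell adjacent to $c$ rather than through $c$ itself, and then to choose the measurement order so that the spurious inter-line edges created by one local complement are toggled off again by a second.

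The key steps, in order, are: (i) pick the minimal patch and delete all unused qubits by $[Z]$; (ii) apply the measurements realizing contraction along each chain ($[Y]$ on interior qubits, which by the second rule local-complements and then deletes, directly connecting the chain's neighbors) so that each line is shortened to a single edge; and (iii) apply the crossing measurements at $c$ and the auxiliary qubit, invoking the local-complement rules to verify edge by edge that the edges $l\!-\!r$ and $t\!-\!b$ survive while every edge between $\{l,r\}$ and $\{t,b\}$ is cancelled. At each stage I would also record the outcome-dependent local Clifford byproduct guaranteed by the cited result $\cite{HD06}$, and note that since the patch has constant size and uses a constant number of measurements, the crossing is simulated within the lattice at constant depth and with no qubit belonging to both chains.

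The step I expect to be the main obstacle is (iii): arranging the local complements so that the four unwanted inter-line edges cancel exactly. A single local complement about the crossing region unavoidably yields a complete graph on the four terminals, so the construction must rely on a second local complement whose neighbor set overlaps the first in precisely the pairs $\{l,t\},\{l,b\},\{r,t\},\{r,b\}$ and not in $\{l,r\}$ or $\{t,b\}$. Verifying that such an overlap is realizable inside the planar lattice — equivalently, that the cancellation pattern is consistent with the adjacency forced by the square grid — is where the real work lies, and I would settle it by an explicit computation on the constant-size adjacency matrix rather than by appeal to any general principle.
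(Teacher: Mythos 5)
Your strategy coincides with the paper's: both prune a constant-size patch of the 2D lattice with $[Z]$ measurements and then use $[Y]$ measurements, ordered so that successive local complementations cancel the unwanted edges between the two lines, leaving two vertex-disjoint chains. The paper's proof is exactly this, instantiated on the explicit $5\times 5$ patch of Fig.~2, with a fixed measurement schedule: all $Z$ measurements first (with byproduct corrections), then all $Y$ measurements except the central qubit, then the $Y$ measurement on the central qubit, producing a chain from terminal 1 to 2 and a disjoint chain from 3 to 4.

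The substantive gap is that you stop exactly where the lemma lives. You correctly diagnose the obstruction --- a single local complement about the crossing region yields $K_4$ on the four terminals, so a second local complement must toggle off precisely the edges between $\{l,r\}$ and $\{t,b\}$ while preserving $l$--$r$ and $t$--$b$ --- but you do not exhibit the gadget, fix the auxiliary-qubit placement, or carry out the adjacency-matrix computation showing that this cancellation pattern is actually realizable inside the square grid. Since the entire content of the lemma is the existence of such an explicit constant-size pattern, deferring that verification leaves the claim unproved; your proposed routing of the vertical line through a cell adjacent to the center is plausible and in the same spirit as the paper's construction, but it must be pinned down and checked edge by edge (as the paper does, implicitly, by displaying the concrete $5\times 5$ pattern and its measurement order) before the lemma is established.
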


\begin{proof}
As shown in Fig.2, chain crossing can be implemented by performing
some Pauli measurements in a $5\times5$ lattice cluster state. In
the first step, we perform all $Z$ measurements and correct the
related byproduct local unitaries. Then, we do all $Y$ measurements
except the central one and correct them. Finally, we perform the $Y$
measurement on the central qubit and correct it. The result is a
chain from 1 to 2 and another chain from 3 to 4. The two chains
cross each other without sharing any qubit.
\end{proof}

\textbf{Remark:} One can use the rewrite rules of the measurement
calculus \cite{DKP06}, to reduce the running time by postponing the
local corrections till the end of $1WQC$. However it is unnecessary
here since the running time of simulating chain crossing is already
constant.

\begin{figure}
\includegraphics[width=5.0cm]{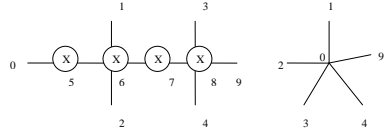}
\caption[Fig.3.] {Contraction through a chain}
\end{figure}

\begin{lem}(Contraction\cite{HMP06})\label{contraction}
Let graph G(L-v-a-b-R) consists of subgraph L and R and three
vertices $v$,$a$ and $b$. Vertex $a$ has two edges (a,v) and (a,b).
Vertices $v$ and $b$ have edges connected to the vertices in the
subgraphs $L$ and $R$ respectively. After applying $X$ measurements
on vertices $a$ and $b$, the graph state corresponding to graph $G$
(L-v-a-b-R) will change to the graph state corresponding to the
graph $\widetilde{G}$ (L-v-R).
\end{lem}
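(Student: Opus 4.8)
The plan is to prove the contraction lemma by directly applying the graphical rules for the two $X$ measurements stated in the previous subsection, carrying them out in sequence and simplifying. Recall that the rule for an $X$ measurement on a vertex involves choosing one of its neighbors and performing the composite operation $\widetilde{G}=\lambda_b(\lambda_a\circ\lambda_b(G)-a)$. The key structural feature to exploit is that vertex $a$ has degree exactly two, with neighbors $v$ and $b$, so the local complementations act on a very small and controlled neighborhood. This should keep the bookkeeping manageable despite the rules looking complicated.

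First I would measure $X$ on vertex $a$. In applying the rule $\widetilde{G}=\lambda_b(\lambda_a\circ\lambda_b(G)-a)$, the natural choice of neighbor is $b$ (the vertex that will itself be measured next). Because $a$'s only neighbors are $v$ and $b$, the local complement $\lambda_a$ toggles precisely the single edge $(v,b)$; and the local complements $\lambda_b$ act on $b$'s neighborhood, which consists of $a$, $v$ (after toggling), and the vertices of $R$. I would track exactly which edges among $\{v\}\cup R$ are created or destroyed by the two applications of $\lambda_b$ and the intervening $\lambda_a$. The expected outcome of this first measurement is a graph on $L\text{-}v\text{-}b\text{-}R$ in which $v$ has become adjacent to $b$ (and hence effectively connected toward $R$), i.e. the chain has been shortened by one vertex while preserving the connections of $v$ to $L$ and of $b$ to $R$.

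Next I would measure $X$ on vertex $b$ in the resulting graph, again applying the same rule with a suitably chosen neighbor (here $v$ is the natural choice). After simplification, vertex $a$ and then vertex $b$ have both been deleted, and the surviving graph should be exactly $\widetilde{G}$ on $L\text{-}v\text{-}R$, with $v$ now carrying the edges that $b$ formerly had to $R$ in addition to its own edges to $L$. I would verify that the edges internal to $L$ and internal to $R$ are unchanged and that no spurious edges between $L$ and $R$ are introduced. Throughout, each Pauli measurement leaves the state a graph state only up to a local Clifford correction depending on the outcome, so I would note that these byproduct operators are tracked and corrected exactly as in the one-way model, which does not affect the underlying graph.

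The main obstacle I anticipate is the careful edge-by-edge bookkeeping in the composite $X$-measurement rule: each $X$ measurement expands into three local complementations plus a deletion, and local complementation on $b$ flips all edges within $b$'s neighborhood, so I must handle the interaction between the subgraph $R$ (attached to $b$) and the vertex $v$ without introducing unwanted adjacencies among the $R$-vertices. The cleanest way to control this is to observe that $a$ has degree two, which drastically limits how $\lambda_a$ and the paired $\lambda_b$ operations interact, and to verify the claim on the reduced local picture of $\{v,a,b\}$ together with their external edges, since local complement only affects edges inside the relevant neighborhoods. Once the local picture is settled the extension to the full graph $G$ is immediate.
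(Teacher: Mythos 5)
Your overall strategy---composing the graphical $X$-measurement rule twice---runs into two concrete problems, and the second one is fatal to the plan as stated. First, a bookkeeping slip: in the rule $\widetilde{G}=\lambda_b(\lambda_a\circ\lambda_b(G)-a)$ the local complementation $\lambda_b$ is applied \emph{before} $\lambda_a$, and that first $\lambda_b$ joins $a$ to every $R$-neighbour of $b$; so by the time $\lambda_a$ acts, $a$ no longer has degree two, and $\lambda_a$ toggles all edges inside $\{v,b\}\cup N_R(b)$, not just the single edge $(v,b)$. Your intermediate picture after the first measurement is also not what the rule produces: for the $5$-path $l$-$v$-$a$-$b$-$r$ the rule yields the star with centre $v$ and leaves $l,b,r$, i.e.\ the $R$-edges migrate from $b$ to $v$ and $b$ becomes a pendant vertex, rather than ``$b$ keeping its connections to $R$.''

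Second, and more importantly, the byproduct local Clifford of the first $X$ measurement sits precisely on the special neighbour $b$ (it is of the form $(\pm i Y_b)^{1/2}$ times some $Z$'s), and $b$ is the very qubit you measure next. This is not a harmless correction that can be ``tracked and corrected exactly as in the one-way model'' at the end: conjugating the second measurement through it turns $[X]_b$ into a \emph{different} Pauli measurement on the intermediate graph state. If you ignore it and simply apply the graphical $[X]_b$ rule to the intermediate graph, you get the wrong answer---in the $5$-path example you obtain the empty graph on $\{l,v,r\}$ instead of the path $l$-$v$-$r$ that a direct stabilizer computation gives. This is exactly the subtlety the lemma is meant to control (the point is that the two measurements can be performed \emph{simultaneously}, with only $Z$-type corrections deferred), and it is why the paper's appendix abandons the graphical rules and instead expands $|G\rangle=(|0\rangle+Z_vZ_b|1\rangle)_a(|0\rangle+Z_R|1\rangle)_b(|0\rangle+Z_L|1\rangle)_v|G_R\rangle|G_L\rangle$ and computes the post-measurement state explicitly, obtaining $Z_v^{y}Z_R^{x}|\widetilde{G}\rangle$ with manifestly $Z$-type, hence deferrable, byproducts. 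To rescue your approach you would have to either apply the first byproduct before the second measurement (making the protocol adaptive and sequential, contrary to the intended use), or conjugate the second measurement basis through the byproduct and apply the graphical rule for the resulting basis---at which point the direct algebraic computation is no harder.
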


\begin{proof} The lemma can be verified by applying the basic graphic rule
about $[x]_a$. The contractions through a chain can be done
simultaneously. For example in Fig.3. if $X$ measurements are
applied on the qubits 5, 6, 7 and 8 at the same time, the graph will
contract to the vertex 0, whether or not a $Z$ operation on the
qubit 0 is needed for local correction depends on the sum of
measurement outcomes of the qubits 6 and 8. Local $Z$ corrections on
the qubits 1 and 2 depend on the measurement outcome of the qubit 5.
Local $Z$ corrections on the qubits 3, 4 and 9 depend on the sum of
measurement outcomes of qubit 5 and 7.

To understand why the contractions through a chain can be done
simultaneously and how local operations can be postponed to the end
of the computing, we have to go through some complicated
calculations step by step carefully. A detailed proof of the lemma
is included in the appendix.
\end{proof}

\subsection{Generating arbitrary graph states from the cluster states of 2D lattice}

\begin{figure}
\includegraphics[width=5.0cm]{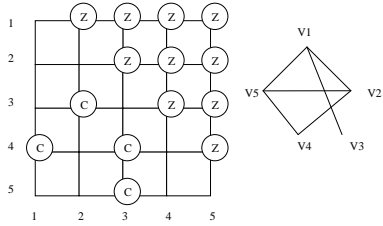}
\caption[Fig.4.] { Generating arbitrary graph states from the
cluster states of 2D lattice}
\end{figure}

\begin{thm}[Generating arbitrary graph state from the cluster states of 2D lattice]
Any graph state with the underlying graph $G$ can be generated from
a  $O(n) \times O(n) $ cluster state by local measurements and local
unitaries in constant time.
\end{thm}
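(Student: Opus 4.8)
The plan is to turn the theorem into a layout-and-routing problem on the 2D cluster state and then discharge it with the Crossing and Contraction lemmas. The starting resource is a grid graph state whose only edges are nearest-neighbor, so the task is to reshape this fixed grid into the target graph $G$ using the four local operations already available: local complement ($[Y]_a$), vertex deletion ($[Z]_a$), chain crossing, and chain contraction (Lemma~\ref{contraction}). First I would fix a crossbar embedding: designate $n$ cluster qubits $v_1,\dots,v_n$ as the logical vertices, placing $v_i$ at the diagonal cell $(i,i)$ of an $n\times n$ array of tiles, and let the horizontal rail of row $i$ and the vertical rail of column $j$ meet at tile $(i,j)$. Each logical tile is realized by a patch of constant size (a $5\times5$ block suffices for the crossing gadget), so the whole pattern lives on an $O(n)\times O(n)$ physical lattice.

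Next I would process every intersection tile $(i,j)$ with $i<j$ according to the adjacency matrix. If $G_{ij}=1$, I install a \emph{connection} gadget that links the row-$i$ rail to the column-$j$ rail, creating a routed path $v_i \to (i,j) \to v_j$; if $G_{ij}=0$, I install the \emph{crossing} gadget of the Crossing lemma so the two rails pass through without producing a spurious edge. The result is a graph state in which, for each edge of $G$, there is a path of routing qubits joining the two endpoints, while non-adjacent rails only cross. I would then apply $X$ measurements along all routing rails to contract every such path down to a single edge $v_i$--$v_j$ (Lemma~\ref{contraction}), and finally $[Z]_a$-measure any leftover qubits to delete them. Since each contraction collapses a rail onto its vertex while preserving exactly its intended adjacencies, the surviving graph on $v_1,\dots,v_n$ is precisely $G$, so the remaining state is $|G\rangle$ up to a known local Clifford.

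For the complexity bounds I would argue two things. The area is $O(n)\times O(n)$ because the crossbar uses exactly one constant-size tile per ordered pair of vertices plus constant-width rails; there is exactly one cell per potential edge rather than one long independent wire per edge, and this is what keeps the shared routing infrastructure quadratic rather than cubic. For the depth, every operation used is a single-qubit Pauli measurement, and Pauli measurements on distinct qubits commute; the crossing gadget needs only its fixed three-round schedule ($Z$, then all-but-central $Y$, then the central $Y$), the contractions can be carried out simultaneously, and the deletions form one further round. Hence the measurements occupy $O(1)$ rounds, and all the outcome-dependent local Clifford corrections can be postponed and fused into a single final layer of single-qubit gates, giving constant quantum depth.

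\textbf{The hard part} will be the two global consistency claims that the per-gadget lemmas do not by themselves guarantee. First is the embedding/routing argument: one must check that an arbitrary, possibly dense and non-planar, graph really does fit in the crossbar so that any two rails meet in at most one transversal intersection (never overlapping along a segment), that rails sharing the vertex $v_i$ interact only at $v_i$, and that the branching up to degree $(n-1)$ at each vertex is produced correctly by the connection gadgets. Second, and more delicate, is the correction bookkeeping across $O(n^2)$ gadgets at once: I would need to verify that the byproduct operators generated by the simultaneous crossings and contractions compose into a single well-defined local Clifford, so that running all gadgets in parallel genuinely yields constant depth rather than forcing a sequential cascade of adaptive corrections. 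This global simultaneity, rather than any individual gadget, is the technical core, and it is exactly what the detailed contraction calculation deferred to the appendix is designed to support.
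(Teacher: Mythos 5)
Your proposal matches the paper's construction essentially step for step: vertices placed on the diagonal of an $O(n)\times O(n)$ lattice, a crossing gadget at each off-diagonal intersection corresponding to a non-edge (the paper uses the lower triangle and $Z$-deletes the upper triangle, a cosmetic difference), and simultaneous contractions along the rails onto the diagonal, with constant depth following from parallel Pauli measurements and postponed byproduct corrections justified by the appendix calculation. The "hard parts" you flag are real but are exactly what the paper's Lemma on contraction and its appendix are meant to cover, so no substantive gap separates your argument from theirs.
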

\begin{proof}

Given a graph $G=(V,E)$ with $n$ vertices $v_1,...v_n$ and $m$ edges
$E_1,...,E_m$, we need to perform some crossings and contractions on
a cluster state of 2D lattice to generate a graph state $|G\rangle$.

We need some several auxiliary qubits. A $5 \times 5 $ lattice is
required for implementing crossings whereas contractions between any
two vertices with degree great than two requires degree two
auxiliary qubits. However, we can introduce those auxiliary qubits
by increasing the length and width of the 2D lattice only by a
constant value. Thus to simplify the proof, we ignore those
auxiliary qubits and only consider an $ n\times n$ lattice.

In the first step, we perform $Z$ measurements on the qubit located
at the intersection of the $i$th row and the $j$th column where
$i<j$ and $i,j\in \{1,2,...,n\}$. In the second step, we perform a
crossing operation for any qubit on the location $(i,j)$ which
satisfies $i>j$ and $(v_i,v_j)\notin E$. In the final step, we
contract simultaneously through the columns $1,2,...n$ to the
locations (1,1),(2,2),...,(n,n).

We show an example in Fig.4.
\end{proof}
In fact, since we only need to generate the specific graph states
related to the graph codes, by carefully rearranging the protocol in
Theorem III.1, it is not difficult to reduce the size of the cluster
state needed by our encoding method, to $O(n) \times O(d)$. (For
convenience, we introduce the above one)

\subsection{Construction of 1WQC encoding graph codes}

Combining separate pieces of the operations together, our encoding
algorithm is the following: given any $n$-qubit stabilizer code with
$d$ generators, we first determine the equivalent graph code and the
graphic representation of the graph code. Then we build the cluster
state of 2D lattice and generate the corresponding graph state from
the cluster state of 2D lattice. Finally, we encode any unknown
quantum state by quantum teleportation.

\section{Discussion}
According to the above section, the whole computation which
generates the graph state related to any $n$-qubit graph code with
$d$ generators can be conducted on an $O(n)\times O(d)$ lattice.
Therefore the total number of quantum operations of the 1WQC is
bounded by $O(nd)$, which is the length of the description of
generating matrix of the stabilizer code. Therefore, both the size
and the depth of our 1WQC are most likely optimal in general case.

Note that our construction has a constant running time. Since qubit
coherent time is limited, improving the temporal overhead of
encoding procedure will be helpful for its physical implementation.

Furthermore, it should be pointed out that in the procedure of
preparing graph state based on cluster states, most operations we
need are single-qubit operation (except when teleportating unknown
state, where a small mount of two-qubit measurements are involved).
Usually, in experiments the fidelity of one-qubit operations is very
high. Thus, ignoring errors introduced by one-qubit operations, our
encoding procedure will be reasonable as long as the quality of
cluster states we use as foundation is good enough.

The decoding procedure can be done in a similar way as the encoding
procedure, if one can implement the quantum teleportation on an
encoded state. For the error detecting and fault tolerant
computation on the states encoded in the stabilizer codes, one can
apply methods in \cite{G97}.

\section{Conclusion}

In this paper, we have shown how to use one-way quantum computation
to implement an encoding and decoding procedure for quantum error
correction. We have constructed an $O(nd)$ size and constant-depth
1WQC-compatible circuit which encodes any given $n$-qubit stabilizer
code with $d$ generators. The result demonstrates that the cluster
states can be used as the scalable resources for many multi-qubit
entangled states and the one-way quantum computation model can help
to design better quantum algorithms than the traditional quantum
circuit model.

\begin{acknowledgments}

W. Huang thanks Y. Y. Shi and Y.-J.~Han, Z. Wei thanks L. M. Duan
and Y. -J. Han for discussions about graph states. W. Huang would
also like to thank labmates D. R. Vandenberg and R. Duan for
interesting discussions about local complement operations. This work
was supported in part by The China Scholarship Council, the NSF
awards (0431476), the ARDA under ARO contracts and the A. P. Sloan
Fellowship.
\end{acknowledgments}

\onecolumngrid

\appendix*
\section{}

\begin{proof}[Proof of Lemma \ref{contraction}]

At the beginning, the initial state $\qs{\psi}$ is the graph state
$|G\rangle$.
\begin{align*}
\qs{\psi} & = \g = (\0 + Z_vZ_b \1  )_a (\0 + Z_R \1  )_b (\0 + Z_L
\1 )_v \gr \gl \\&= ((I+Z_vZ_b)\qs{+}+(I-Z_vZ_b)\qs{-})_a (\0 + Z_R
\1 )_b (\0 + Z_L \1 )_v \gr \gl
\end{align*}

Apply $X $ measurements on qubit a, let $x\in\{0,1\}$ be the
measurement result,the remaining state of other qubits is
\begin{align*}
\qs{\psi} & = (I+(-1)^xZ_vZ_b)(\0 + Z_R \1  )_b (\0 + Z_L \1 )_v
\gr\gl \\&= ((I+(-1)^xZ_v)\0 + (I-(-1)^xZ_v)Z_R\1)_b (\0 + Z_L \1
)_v \gr\gl \\&= ((I+(-1)^xZ_v+Z_R-(-1)^xZ_vZ_R)\qs{+} +
(I+(-1)^xZ_v-Z_R+(-1)^xZ_vZ_R)\qs{-})_b \\&(\0 + Z_L \1 )_v\gr\gl
\end{align*}

Then apply $X $ measurements on qubit b, let $y\in\{0,1\}$ be the
measurement result, let $Z^0 = I$ and $Z^1 = Z$,  we have
\begin{align*}
\qs{\psi} & = (I+(-1)^xZ_v+(-1)^yZ_R-(-1)^{x+y}Z_vZ_R)(\0 + Z_L \1
)_v\gr\gl \\&= ((1+(-1)^x)I+(-1)^y(1-(-1)^x)Z_R)\0
+((1-(-1)^x)I+(-1)^y(1+(-1)^x)Z_R)Z_L\1) \\& \gr\gl
\\&  =
\begin{cases}
(\0+(-1)^yZ_RZ_L\1)_v\gr\gl& \text{if x = 0 },\\
((-1)^yZ_R\0+Z_L\1)_v\gr\gl & \text{if x = 1}.
\end{cases}
\\&= Z_v^yZ_R^x(\0 + Z_RZ_L\1)_v\gr\gl
\\&= Z_v^yZ_R^x\qs{\widetilde{G}}
\end{align*}
\end{proof}
More generally, we consider the effect of graph contraction on
generalized graph state
$Z_L^{u_0}Z_v^{v_0}Z_R^{w_0}Z_a^{x_0}Z_b^{y_0}\g$, where
$\{u_0,v_0,w_0,x_0,y_0\} \in \{0,1\}$. In this case, we first apply
some local Z operations on graph state $\g$, then apply $X$
measurements on qubits a and b, denoted as $[X]_a$ and $[X]_b$ .
Suppose measurement results are x and y respectively. Applying $Z$
operation before X measurement $[X]$ on a qubit  does nothing but
flip the measurement outcome, therefore \begin{align*} \qs{\psi} & =
[X]_a[X]_bZ_L^{u_0}Z_v^{v_0}Z_R^{w_0}Z_a^{x_0}Z_b^{y_0}\g
\\&=
Z_L^{u_0}Z_v^{v_0}Z_R^{w_0}[X]_aZ_a^{x_0}[X]_bZ_b^{y_0}\g
\\&=Z_L^{u_0}Z_v^{v_0+y_0+y}Z_R^{w_0+x_0+x}\qs{\widetilde{G}}
\end{align*}

We can see that local Z operations on $a$ and $b$ pass to $R$ and
$v$ respectively after graph contraction.

\end{document}